\begin{document}

%

%

\twocolumn[

\aistatstitle{A Dynamic Programming Algorithm to Compute Joint Distribution of Order Statistics on Graphs}

\aistatsauthor{ Rigel Galgana \And Amy Greenwald \And  Takehiro Oyakawa }

\aistatsaddress{ Brown University \And  Brown University \And Brown University } ]

\begin{abstract}

\noindent Order statistics play a fundamental role in statistical procedures such as risk estimation, outlier detection, and multiple hypothesis testing as well as in the analyses of mechanism design, queues, load balancing, and various other logistical processes involving ranks. 
In some of these cases, it may be desirable to compute the \textit{exact} values from the joint distribution of $\numos$ order statistics. While this problem is already computationally difficult even in the case of $\numrv$ independent random variables, the random variables often have no such independence guarantees.
Existing methods obtain the cumulative distribution indirectly by first computing and then aggregating over the marginal distributions.
In this paper, we provide a more direct, efficient algorithm to compute cumulative joint order statistic distributions of dependent random variables that improves an existing dynamic programming solution via dimensionality reduction techniques. Our solution guarantees a $O(\frac{\numos^{\numos-1}}{\numrv})$ and $O(\numos^{\numos})$ factor of improvement in both time and space complexity respectively over previous methods.

\end{abstract}

\if 0
\noindent Order statistics play a fundamental role in statistical procedures such as risk estimation, outlier detection, and multiple hypothesis testing as well as in the analyses of mechanism design, queues, load balancing, and various other logistical processes involving ranks. 
In some of these cases, it may be desirable to compute the exact values from the joint distribution of $d$ order statistics. While this problem is already computationally difficult even in the case of $n$ independent random variables, the random variables often have no such independence guarantees.
Existing methods obtain the cumulative distribution indirectly by first computing and then aggregating over the marginal distributions.
In this paper, we provide a more direct, efficient algorithm to compute cumulative joint order statistic distributions of dependent random variables that improves an existing dynamic programming solution via dimensionality reduction techniques. Our solution guarantees a $O(\frac{d^{d-1}}{n})$ and $O(d^{d})$ factor of improvement in both time and space complexity respectively over previous methods.
\fi
\section{Introduction}
\label{sec:introduction}

Let $\rv_1, \ldots, \rv_{\numrv}$ denote a collection of $\numrv$ real-valued random variables.
The $i$th order statistic, denoted as $\rv_{(i)}$, is defined as the $i$th smallest value of $\rv_1, \ldots, \rv_{\numrv}$.

Order statistics often arise in the theory and application of risk estimation and management as they possess strong robustness guarantees considering their ease of interpretation and computational simplicity (\cite{balakrishnan2007permanents, benjamini1995controlling,shorack1996empirical}). They also appear naturally across various disciplines such as mechanism design and auction theory (\cite{varian}), queue inference (\cite{jones1995queues}), and wireless communication and scheduling (\cite{yang2011order}). Their distributional properties, however, are less well understood, especially without the i.i.d. assumption. Various procedures exist for evaluating order statistic distributions at a point, all of which either explicitly or implicitly solve an equivalent combinatorial problem. At a high level, this combinatorial problem involves a partition of the real line into contiguous regions, which we refer to as bins, and tracks the probabilities of the various ways that the random variables could fall within these bins such that certain constraints are satisfied. 
For example, the algorithm described in \cite{boncelet1987algorithms}---which we will be examining more closely in the following section---computes the joint distribution of $\numos$ order statistics $\mathcal{C} = (C_1,\ldots,C_\numos)$ of $n$ random variables by recursively updating the joint probability mass function of the number of random variables inside each bin. While this approach suffers from high computational and memory costs, it can be extended to handle dependent random variables.
Several methods make strict distributional assumptions in order to achieve higher efficiency.  
For example, \cite{glueck2008algorithms} and \cite{schroeder} assume groups of $m \ll n$ homogeneous, independent random variables $\{X_{i, j}\}_{i \in [m], j \in [n_i]}$. \footnote{For notational ease, we let $[a] \equiv \{1,\ldots,a\}$, $[a^*] \equiv \{0,\ldots,a\}, [a^+] \equiv \{1,\ldots,a+1\}.$}.
\cite{galgana2021dynamic} simplified the combinatorial problem structure for this $m$-populations case, allowing polynomial time computation in both $\numos$ and $\numrv$. 

\paragraph{Main Contribution} 
The main contribution of this work is a combinatorial simplification of Boncelet Jr.'s algorithm that significantly reduces its time and space complexity.
With this improvement, we derive an algorithm that offers most of the flexibility of Boncelet Jr.'s and the speed of Galgana and Shi's algorithms. Before describing our approach, we must first formally state the problem of interest. 




\begin{definition}[Galgana and Shi, et. al. (2021)]
    \label{def:jocdf_cts}
    We desire to compute the joint cumulative distribution function (cdf) $F_{\mathcal{C}} (\bm{\rvv})$ of an ordered set $\mathcal{C} = (c_1, \ldots, c_{\numos})$ of $\numos \in \mathbb{Z}_{>0}$ select order statistics of $\numrv \ge \numos$ random variables $X_1,\ldots,X_n \sim \bm{F}$, where $\bm{\rvv} \in \mathbb{R}^{\numos}$. Letting $f_{\mathcal{C}} (\bm{\rvv})$ be the corresponding joint probability density function, we have:%
    \begin{eqnarray}
    \label{eq:jocdf_cts}
    F_{\mathcal{C}}(\bm{\rvv})
        &=& F_{\rv_{(c_1)}, \ldots, \rv_{(c_{\numos})}} (\rvv_{1}, \dots, \rvv_{\numos}) \label{eqn:goal} \nonumber\\ 
        &=& \prob \left( \rv_{(c_1)} \leq \rvv_1, \dots, \rv_{(c_{\numos})} \leq \rvv_{\numos} \right)\\
        &=& \int_{t_1, \ldots, t_{\numos} \in \mathbb{R}^{\numos} \textrm{ s.t.} \, t_i \le \rvv_i, \forall i \in [ \numos ]} f_{\mathcal{C}} (\bm{t}) \, d \bm{t} \nonumber.
    \end{eqnarray}
\end{definition}

As mentioned previously, there are several ways of approaching this problem in various special cases. Our proposed algorithm works best when $\numos$ is small, as the computational complexity 
is still exponential in $\numos$. However our approach is both significantly more computationally and memory efficient than Boncelet Jr.'s whilst retaining the latter's flexibility to handle dependency.

The paper is outlined as follows: In the next section, we focus on the independent, but not necessarily identically distributed random variables case and translate the inherently continuous joint cdf 
problem as in Equation~\eqref{eq:jocdf_cts} to an equivalent combinatorial problem (Section~\ref{sec:combinatorial}) 
. Using this combinatorial setup, the next section details Boncelet Jr.'s algorithm, as well as motivate how Galgana and Shi's dimensionality reduction insight can be used to improve its performance in the case of the joint cdf (Section~\ref{sec:combinatorial}). We then provide rigorous justification to our algorithmic speedup and state the key theorems and recurrence relation. Afterwards, we state the algorithm in its entirety for the i.i.d. case with accompanying complexity analysis and experiments (Section~\ref{sec:solution}). Lastly, we describe the extension to the case of dependent random variables as well as its impact on time and space complexities (Section~\ref{sec:extensions}). We then conclude with a summary of our work (Section~\ref{sec:conclusion}).

\section{The Combinatorial Problem}
\label{sec:combinatorial}

In this section, we borrow heavily from Galgana and Shi's notation as we translate the joint cdf problem of a collection of order statistics of random variables as given in Equation~\eqref{eq:jocdf_cts} to an equivalent combinatorial problem of tossing balls into bins. For simplicity, we assume independence with $X_i \sim F_i$ $\forall i \in [n]$ for the following three sections. Restating this computation as a problem involving balls and bins, we define $\rvv_0 = -\infty$ and $\rvv_{\numos+1} = +\infty$, and partition the real line into $\numos+1$ intervals, $\interval_1, \ldots, \interval_{\numos+1}$, hereafter bins:
%
$(-\infty, +\infty)
    = (\rvv_0, \rvv_{\numos+1}) 
    = \cup_{j=1}^{\numos} (\rvv_{j-1}, \rvv_j] \cup (\rvv_{\numos}, \rvv_{\numos+1})
    = \cup_{j=1}^{\numos+1} \interval_j 
    = I_{1:\numos+1}$.
%
Here, $I_{j:k} = \cup_{i=j}^k \interval_i$ denotes the union of bins $I_j$ through $I_k$ inclusive. Moreover, we let $p_{i, j} = \prob \left( \{ X_i \in I_j \} \right) = \cdf_i (\rvv_j) - \cdf_i (\rvv_{j-1})$ denote the probability that the $i$th ball resides in the $j$th bin, for all $j \in [\numos]$.
For $j, j' \in [\numos]$, we define the key combinatorial objects:
\begin{itemize}

\item Define $\ballcount_{i, j} \doteq \sum_{k=1}^i \mathbf{1}_{X_i \in I_j}$ as the ``number of the first $i$ balls that reside in bin $I_j$.'' 
Overloading notation, we use the random variable $\ballcount_{i, j:j'} = \sum_{k=1}^i \mathbf{1}_{X_i \in I_{j:j'}}$ as ``the number of the first $i$ balls that reside in bins $I_j$ through $I_{j'}$.''

\item Furthermore, we define the random  vectors
$\bm{\ballcount}_i \doteq (\ballcount_{i, 1}, \ldots, \ballcount_{i, \numos}) \in [i]^{\numos}$ to denote ``the number of the first $i$ balls that reside in each bin.''

\item We define the event $\event_{i, j} \doteq \{ \ballcount_{i, 1:j} \geq c_j \}$ as ``the event that at least $c_j$ of the first $i$ balls reside in the first $j$ bins.'' Overloading notation again, we also define the event $\event_{i, j:j'} \doteq \cap_{k=j}^{j'} \event_{i, k}$ as ``the event that at least $c_k$ of the first balls reside in the first $k$ bins, $I_{1:k}$, for all $k \in \{ j, \ldots, j' \}$.''
\end{itemize}

To explain the connection between the original problem of computing $\cdf_{\mathcal{C}} (\bm{x})$ and this combinatorial setup, note that the event $\rv_{(c_j)} \leq \rvv_j$ is means ``the $c_j$th smallest value of the $\numrv$ random variables is less than or equal to $\rvv_j$.'' Equivalently, in the combinatorial setup we have that 
``there are at least $c_j$ balls in bins $I_{1:j}$'': i.e., the event $\event_{n, j}$ holds. We call $\event_{n, j}$ the $j$th bin condition. With this, we revisit Equation~\eqref{eq:jocdf_cts} and see that $\cdf_{\mathcal{C}} (\bm{x})$ is equivalent to:
%
\begin{align*}
\prob ( \bigcap_{j=1}^{\numos} { \rv_{(c_j)} \leq \rvv_j } ) 
= \prob ( \bigcap_{j=1}^{\numos} { \ballcount_{n, 1:j} \geq c_j } ) 
= \prob \left( \event_{n, 1:\numos} \right)
.
\end{align*}
\noindent
In other words, we are interested in computing the probability of satisfying all $\numos$ bin conditions.
\begin{definition}
Assuming 
$\numrv$ independent random variables,
we let $\numrv, \numos, \bm{\rvv}, \mathcal{C},
F_1,\ldots,F_{\numrv}$ be as in Equation~\eqref{eq:jocdf_cts}. With $x_0 = -\infty$, $x_{\numos+1} = \infty$, and $\bm{p} = (p_{i, j})_{i \in [\numrv], j \in [\numos^+]} = (F_i(\rvv_j) - F_i(\rvv_{j-1}))_{i \in [\numrv], j \in [\numos^+]}$, we define the combinatorial problem \ijocdf $(\mathcal{C}, \bm{p})$ as computing the probability $\prob \left( \event_{\numrv, 1:\numos} \right)$.
\end{definition}

This problem statement is fairly general and allows for both continuous and discrete distributions, with the only major assumption being independence. When the independence assumption is relaxed, the definition will change slightly and we will need to construct a corresponding graphical model with corresponding ball throwing order---a schedule. We reserve the details for the later sections, as much of the notation is unnecessary and may detract from the intuitions gained from the solution to the independent variables problem.
\section{Related Works}
\label{sec:related_works}

The primary goal of this paper is to improve the performance of Boncelet Jr.'s algorithm using Galgana and Shi's insights. 
While Boncelet Jr.'s algorithm's performance is independent of the number of underlying distribution, its time and space complexity are exponential in $\numos$ with base $\numrv$ making it viable only in cases where $\numos$ or $\numrv$ are both small. 
While our proposed procedure will remain exponential in $\numos$ like Boncelet Jr.'s algorithm, the base will be smaller. To that end, we first describe both pieces in order to understand their interaction.

\subsection{Boncelet Jr.'s Algorithm}

At a high level, Boncelet Jr.'s algorithm to solve \ijocdf($\mathcal{C}, \bm{p}$) maintains probability tables describing the number of balls in each bin. Using our terminology, the algorithm recursively updates a table of probabilities of $\bm{C}_i = \bm{k}$:

\begin{definition}
    Let $\mathcal{C}$ and $\bm{p}$ be as in \ijocdf($\mathcal{C}, \bm{p}$). We define $e_{j} = (0,\ldots,1,\ldots,0)$ as a length $\numos$ vector of 0's with a 1 in the $j$th position. Then by the law of total probability, we have that $\prob(\bm{C}_i = \bm{k})$ equals: 
    \begin{align*}
        p_{i,\numos+1} 
        \prob ( \bm{C}_{i-1} = \bm{k} )  + \sum_{\{ j \mid \ballcount_j > 0 \}} p_{i,j} 
        \prob ( \bm{C}_{i-1} = \bm{k} - e_j )
        .
    \end{align*}
    The probability of the corresponding \ijocdf is then $\sum_{\bm{k} \in \mathcal{I}} \prob(\bm{C}_n = \bm{k})$, where $\mathcal{I} \doteq \{\bm{i} \in [\numrv*]^{\numos} \mid 0 \leq i_1 \leq \ldots \leq i_{\numos} \leq \numrv, i_j \geq c_j, \forall j \in [\numos] \}$.
\end{definition}

\noindent Using above recurrence, Boncelet Jr.'s algorithm maintains a table $T_i: [\numrv]^{\numos} \to [0, 1]$ of the probabilities of $\prob(\bm{C}_i = \bm{k})$ as a function of table $T_{i-1}$. It then sums over the entries in $T_{n}(\bm{k})$ such that $\bm{C}_{\numrv} = \bm{k}$ satisfies the bin conditions $\event_{n, 1:d}$. This algorithm extends to the dependent random variables setting 
as well, though a modification must be made to track the locations of certain balls required to obtain subsequent conditional distributions. 
In the independent setting, each of the $\numrv$ dynamic programming tables are of size $O(\numrv^{\numos})$ but can be discarded once they are used to obtain the subsequent table, yielding a space complexity of $O(\numrv^{\numos})$. Furthermore, obtaining each table requires updating $O(\numrv^{\numos})$ entries with a recurrence involving $O(\numos)$ operations, yielding a total of $O(\numos \numrv^{\numos})$ time complexity.

\subsection{Galgana and Shi's Insight}

The method given in Galgana and Shi, et. al. (2021) solves an $m$-distributions variant of \ijocdf($\mathcal{C}, \bm{p}$) in time and space complexity polynomial in both $\numrv$ and $\numos$. This result was an improvement over the best known previous method, as described in \cite{glueck2008algorithms}, as the latter runs in time exponential in $\numos$. The insight that allowed for this speedup 
distribution of $\bm{C}_i$. In combinatorial terms, the conditions $\event_{n, j+1:\numos}$ are independent of the specific configuration of the balls in the first $j$ bins, $(C_{n, 1},\ldots,C_{n, j})$, when given only the sum $C_{n, 1:j}$. Hence, we have:
\begin{lemma}[Galgana and Shi, et. al. (2021)]
    For all $j \in [d]$ and $k_i \in [\numrv^*]$ for all $i \in [j]$, it holds that:
    \begin{align*}
        \left(\event_{n,j+1:d} \perp\!\!\!\perp \bigcap_{i=1}^j \{ \ballcount_{n,i} = k_i \} \mid \{ \ballcount_{n,1:j} = \sum_{i=1}^j k_i \} \right)
        .
    \end{align*}
    
    \if 0
    \begin{align*}
        \prob ( \event_{n,j+1:d}  \mid \cap_{i=1}^j \{ \ballcount_{n,i} = k_i \} ) 
        =
        \nonumber
        \prob ( \event_{n,j+1:d} \mid \{ \ballcount_{n,1:j} = \sum_{i=1}^j k_i \} ) \nonumber 
        .
    \end{align*}
    \label{lem:key_insight_old}
    \fi
\end{lemma}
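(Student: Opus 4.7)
The plan is to first reduce the claim to a conditional independence between bin-count vectors, and then to invoke the classical multinomial splitting property. First I would rewrite each $\event_{\numrv,k}$ for $k \geq j+1$ as
\begin{align*}
\event_{\numrv,k} = \{\ballcount_{\numrv,1:k} \geq c_k\} = \{\ballcount_{\numrv,1:j} + \ballcount_{\numrv,j+1:k} \geq c_k\},
\end{align*}
so that on the conditioning event $\{\ballcount_{\numrv,1:j} = S\}$, with $S \doteq \sum_{i=1}^j k_i$, the event $\event_{\numrv,j+1:\numos}$ becomes a deterministic function of the ``second-group'' counts $(\ballcount_{\numrv,j+1}, \ldots, \ballcount_{\numrv,\numos})$ alone. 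In particular, $\event_{\numrv,j+1:\numos}$ carries no information about the per-bin breakdown $(\ballcount_{\numrv,1}, \ldots, \ballcount_{\numrv,j})$ beyond its sum $S$.

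Next I would invoke the multinomial merging/splitting property. Under the i.i.d.\ setup of the preceding section, the full bin-count vector $(\ballcount_{\numrv,1}, \ldots, \ballcount_{\numrv,\numos+1})$ is Multinomial$(\numrv;\, p_1, \ldots, p_{\numos+1})$. Grouping the bins into $A = \{1, \ldots, j\}$ and $B = \{j+1, \ldots, \numos+1\}$ and writing $q_A = \sum_{i \in A} p_i$, the aggregate $\ballcount_{\numrv,1:j}$ is Binomial$(\numrv,\, q_A)$, and conditional on $\{\ballcount_{\numrv,1:j} = S\}$ the two within-group vectors $(\ballcount_{\numrv,1}, \ldots, \ballcount_{\numrv,j})$ and $(\ballcount_{\numrv,j+1}, \ldots, \ballcount_{\numrv,\numos+1})$ are independent, distributed as Multinomial$(S;\, (p_i/q_A)_{i \in A})$ and Multinomial$(\numrv - S;\, (p_i/(1-q_A))_{i \in B})$ respectively.

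Combining these two observations finishes the proof: the atomic event $\bigcap_{i=1}^j \{\ballcount_{\numrv,i} = k_i\}$ is measurable with respect to the first-group counts, while $\event_{\numrv,j+1:\numos}$ is measurable with respect to the second-group counts together with the conditioned value $S$. The splitting step above makes these two groups conditionally independent given $\{\ballcount_{\numrv,1:j} = S\}$, so
\begin{align*}
\prob\bigl(\event_{\numrv,j+1:\numos} \,\bigm|\, \textstyle\bigcap_{i=1}^j \{\ballcount_{\numrv,i} = k_i\}\bigr) = \prob\bigl(\event_{\numrv,j+1:\numos} \,\bigm|\, \ballcount_{\numrv,1:j} = S\bigr),
\end{align*}
which is the asserted conditional independence.

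The main obstacle in my view is the multinomial splitting step, as it is the only place a distributional assumption is actually invoked and it is precisely what allows us to discard the specific configuration $(k_1, \ldots, k_j)$ while retaining only the sum. Under the i.i.d.\ assumption of the preceding section the splitting identity is classical, so the rest is essentially bookkeeping; the remaining subtleties are restricting to configurations with $\prob(\bigcap_i \{\ballcount_{\numrv,i} = k_i\}) > 0$ so that the conditional probabilities on both sides are well-defined, and verifying the consistency constraint $\sum_i k_i = S$ so the two conditioning events are compatible.
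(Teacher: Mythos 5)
The paper itself offers no proof of this lemma --- it is imported from Galgana and Shi et al.\ (2021) and used only as motivation --- so there is nothing to compare line-by-line; your multinomial-splitting argument is the natural way to prove it and is internally correct. The reduction of $\event_{\numrv,j+1:\numos}$ to a function of the second-group counts given $\ballcount_{\numrv,1:j}=S$ is right, and the classical splitting property of the multinomial does make the two within-group count vectors conditionally independent given $S$, which is exactly the asserted statement.

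However, there is one point you should not gloss over: you write ``under the i.i.d.\ setup of the preceding section,'' but the standing assumption in that section is only \emph{independence}, with $X_i \sim F_i$ allowed to differ across $i$. Your proof silently strengthens the hypothesis to i.i.d., and that strengthening is essential rather than cosmetic: under mere independence the lemma is false. Take $\numrv=2$, $\numos=3$, $j=2$, with ball $1$ landing in $I_1$ or $I_3$ each with probability $\tfrac12$ and ball $2$ landing in $I_2$ or $I_4$ each with probability $\tfrac12$. Conditional on $\ballcount_{2,1:2}=1$ the two equally likely configurations are $(\ballcount_{2,1},\ballcount_{2,2},\ballcount_{2,3})=(1,0,0)$ and $(0,1,1)$, so with $c_3=2$ one has $\prob(\event_{2,3}\mid \ballcount_{2,1}=1,\ballcount_{2,2}=0)=0$ while $\prob(\event_{2,3}\mid \ballcount_{2,1:2}=1)=\tfrac12$; the per-bin breakdown carries information about which ball went where, and hence about the remaining bins. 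So the count vector being genuinely multinomial (i.i.d.\ balls, or at least exchangeable), as in Galgana and Shi's homogeneous-population setting, is the load-bearing hypothesis --- your own closing remark correctly identifies the splitting step as the only place a distributional assumption enters, but you should state explicitly that the lemma does not survive in the generality in which this paper's surrounding section is phrased. (The correctness of the paper's actual algorithm is unaffected, since its Theorem~1 is purely combinatorial and does not rely on this lemma.)
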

\noindent 
This Lemma suggests that the tables of probabilities of exact configurations maintained in Boncelet Jr.'s algorithm may be compressed, yet still retain all the information required to compute \ijocdf($\mathcal{C}, \bm{p}$). Indeed, our new method maintains significantly smaller tables at the cost of only a slightly more computationally expensive updating function. This improvement is accomplished by instead tracking the distribution of a \textit{function} of the $\bm{C}_{i}$'s.
\section{Our Solution}
\label{sec:solution}

In this section, we compress Boncelet Jr.'s probability tables to contain only information relevant to computing the joint cumulative distribution. The crux of our solution is that we can define transform $S : [\numrv^*]^\numos \to [\numrv^*]^\numos$ operating on ball count configurations that allows us to ignore the exact placement of ``superfluous'' balls. In accordance with our goal of efficiently computing \ijocdf($\mathcal{C}, \bm{p}$), this $S$ must have the following properties:
\begin{enumerate}
    \item The distribution of $S(\bm{C}_{n})$ is easier to maintain than that of $\bm{C}_n$.
    \item We must be able to compute $\mathbb{P}(\event_{n, 1:\numos})$ given only the distribution of $S(\bm{C}_{n})$.
\end{enumerate}
Motivating our choice of $S$ is the following observation: conditional on $C_{\numrv, 1:j} \geq c_j$, we can treat any additional balls that land in bin $j$ to be indistinguishable from those landing in bin $j+1$. That is, for $\bm{k}$ such that $k_{1:j} = \sum_{i=1}^j k_i > c_j$, we have:
$\prob ( \event_{\numrv, 1:\numos} \mid \bm{C}_{\numrv} = \bm{k} ) = \prob ( \event_{\numrv, 1:\numos} \mid \bm{C}_{\numrv} = \bm{k} + e_{j+1} - e_j )$.
%
\if 0
\begin{align*}
    \prob ( \event_{\numrv, 1:\numos} \mid \bm{C}_{\numrv} = \bm{k} ) = \prob ( \event_{\numrv, 1:\numos} \mid \bm{C}_{\numrv} = \bm{k} + e_{j+1} - e_j ).
\end{align*}
\fi
Hence, in this case we can `spill' additional balls from bin $j$ to bin $j+1$ starting from $j = 1$ so long we maintain $\event_{\numrv, j}$. Analytically, this is equivalent to defining $S$ such that $S_j(\bm{k}) = \min(c_j, \max(0, k_{1:j}-c_{j-1}))$ where $S_j(\bm{k})$ denotes the $j$th entry of $S(\bm{k})$. This information compression scheme satisfies both criteria by: 1) capping the number of balls in bin $j$ at $c_j$ rather than $\numrv$ and; 2) still allows us to compute $\prob(\event_{n, 1:\numos})$ as the sum of all $\prob(\bm{C}_{n} = \bm{k})$ such that for all $j \in [\numos]$, $\{ S_{1:j}(\bm{k}) = \sum_{i=1}^{j} S_i(\bm{k}) \geq c_j\}$ holds. This latter property derives from balls only spilling to the right, hence $k_{1:j} \geq S_{1:j}(\bm{k})$, from which it follows that $\{S_{1:j}(\bm{k}) \geq c_j\} \implies \{k_{1:j} \geq c_j\} \implies \event_{\numrv, j}$. While this transformation $S$ possesses both our aforementioned desired qualities, there is still room for improvement. 

Under our current transform $S$, we guarantee that $S_j(\bm{k}) \leq c_j$. The distribution of $S(\bm{k})$ is easier to store and compute as the space that $S$ maps to---namely $\bigotimes_{j=1}^\numos [c_j^*]$---is smaller than the space of $\bm{k}$, which is $[\numrv^*]^{\numos}$. It would be even more efficient if we changed the spilling condition of $S$, which is to spill over when $C_{1:j} > c_j$, in such a way that guarantees that $\sum_{i=1}^j S_i(\bm{k}) \leq c_j$. This would further reduce the memory requirements to store the distribution of $S(\bm{k})$ whilst still maintaining the property that $\{S_{1:j}(\bm{k}) \geq c_j\} \implies \{k_{1:j} \geq c_j\} \implies \event_{\numrv, j}$.
In fact, we can define $S$ such that instead of spilling over when there are at least $c_j$ balls in the first $j$ bins, we instead spill over when there are at least $\delta_j = c_j - c_{j-1}$ balls in bin $I_j$. Analytically: 
\begin{definition}
    \label{def: spillover analytical}
    Let $\mathcal{C}, \bm{p}$ be as in \ijocdf($\mathcal{C}, \bm{p}$) and assume $c_0 = 0$. We define the \textit{spilling} transformation $S : [\numrv^*]^\numos \to \bigotimes_{j \in \numos} [\delta_j^*]$ as follows:
    \begin{align*}
        S_j (\bm{k}) 
        = 
        \min \,
        (
        \delta_j, \max_{j' \in [j]} \,
        (
        \sum_{i=j'}^{j} k_{i} - \sum_{i=j'}^{j-1} \delta_i 
        )
        )
        .
    \end{align*}
\end{definition} 
\if 0

\rigel{Do we need the following formal recursive definition, or does the previous statement suffice?} To define $S$ recursively, we can do as follows:

\begin{enumerate}
    \item Let $\bm{k} \in [\numrv]^\numos$ denote the ball count vector which we wish to transform. We first set $S(\bm{k}) = \bm{k}$.
    \item Starting from $j=1$, if $S_j(\bm{k}) \geq \delta_j$, then we move $S_j(\bm{k}) - \delta_j$ of these balls to the subsequent bin, $I_{j+1}$. That is, we update $S_j(\bm{k}) = \delta_j$ and $S_{j+1}(\bm{k}) = S_{j}(\bm{k}) + S_{j+1}(\bm{k}) - \delta_j$. If $S_j(\bm{k}) < \delta_j$, then do nothing.
    \item The same update holds for iteration $\numos$, except we do not have the entry $S_{\numos+1}(\bm{k})$ to update. Instead, the excess balls are simply ignored and treated as balls that fell into $I_{\numos + 1}$.
\end{enumerate}

\fi
With this definition of $S$, we can show equivalence between the event that $S(\bm{C}_n) = (\delta_1,\ldots,\delta_{\numos})$ and bin conditions $\event_{n, 1:\numos}$.

\begin{theorem}
    For any $\bm{C}_n \in [\numrv]^\numos$ and $\bm{\delta} = (\delta_1,\ldots,\delta_{\numos})$, we have that $\event_{n, 1:\numos}$ holds if and only if $S(\bm{C}_n) = \bm{\delta}$. Hence $\mathbb{P}\left(S(\bm{C}_n)) = \bm{\delta}\right) = \mathbb{P}(\event_{n, 1:\numos})$.
\end{theorem}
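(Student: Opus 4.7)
The plan is to prove the set-equivalence $\event_{n,1:\numos} = \{S(\bm{C}_n) = \bm{\delta}\}$ in both directions; the probability identity is then immediate because the two events coincide as subsets of the sample space. For the ``only if'' direction, suppose $\event_{n,1:\numos}$ holds, so $C_{n,1:j} \ge c_j$ for every $j \in [\numos]$. Fix $j$ and substitute $j' = 1$ into the inner maximum of Definition~\ref{def: spillover analytical}: this yields $\sum_{i=1}^{j} C_{n,i} - \sum_{i=1}^{j-1} \delta_i = C_{n,1:j} - c_{j-1} \ge c_j - c_{j-1} = \delta_j$, where I used the telescoping identity $c_{j-1} = \sum_{i=1}^{j-1} \delta_i$ together with the hypothesis. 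Hence the inner max is at least $\delta_j$, so the outer $\min$ equals $\delta_j$, giving $S_j(\bm{C}_n) = \delta_j$ for every $j$.

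The ``if'' direction rests on a telescoping lemma: $\sum_{i=1}^{j} S_i(\bm{k}) \le k_{1:j}$ for every $\bm{k}$ and every $j \in [\numos]$, reflecting the intuition that $S$ only moves balls rightward. I would prove this by first rewriting $S_j$ in the equivalent recursive form $S_j(\bm{k}) = \min(\delta_j, y_j)$, where $y_1 = k_1$ and $y_j = k_j + \max(0, y_{j-1} - \delta_{j-1})$; a short induction on $j$ shows that this recurrence matches the max--min expression in Definition~\ref{def: spillover analytical}. Setting $E_j = \max(0, y_j - \delta_j) \ge 0$ with $E_0 = 0$, one has $S_j(\bm{k}) = y_j - E_j = k_j + E_{j-1} - E_j$, so the partial sums telescope: $\sum_{i=1}^{j} S_i(\bm{k}) = k_{1:j} - E_j \le k_{1:j}$. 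Applied with $\bm{k} = \bm{C}_n$: if $S(\bm{C}_n) = \bm{\delta}$, then for every $j$, $c_j = \sum_{i=1}^{j} \delta_i = \sum_{i=1}^{j} S_i(\bm{C}_n) \le C_{n,1:j}$, which is precisely $\event_{n,j}$, and so $\event_{n,1:\numos}$ holds.

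The main obstacle I anticipate is verifying the recursive-analytic equivalence for $S_j$: the nested max over $j'$ in Definition~\ref{def: spillover analytical} is initially awkward to manipulate, and one has to check by induction that $k_j + \max(0, y_{j-1} - \delta_{j-1})$ reproduces $k_j + \max_{j' \in [j]} \sum_{i=j'}^{j-1} (k_i - \delta_i)$ (where the empty sum is taken to be $0$). Once that equivalence is in hand, the excess telescoping is a one-line computation, and both directions of the theorem follow quickly.
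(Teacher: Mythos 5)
Your proposal is correct, and the probability identity does follow immediately once the two events are shown to coincide. Your forward direction is the same as the paper's: plug $j'=1$ into the inner maximum, use $\sum_{i=1}^{j-1}\delta_i = c_{j-1}$ and the hypothesis $\ballcount_{n,1:j}\ge c_j$ to conclude the max is at least $\delta_j$, so the min clamps to $\delta_j$. Your backward direction, however, is genuinely different. The paper argues by strong induction on $j$ directly from the closed-form definition: from $S_j(\bm{C}_n)=\delta_j$ it extracts $c_j \le \max_{j'\in[j]}(\ballcount_{n,1:j} - (\ballcount_{n,1:j'-1}-c_{j'-1}))$ and uses the inductive hypothesis $\ballcount_{n,1:j'}\ge c_{j'}$ for $j'<j$ to show the maximum is attained at $j'=1$ and equals $\ballcount_{n,1:j}$. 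You instead prove the unconditional telescoping bound $\sum_{i=1}^{j}S_i(\bm{k})\le k_{1:j}$ for \emph{every} configuration $\bm{k}$, via the recursive reformulation $S_j(\bm{k})=\min(\delta_j,y_j)$ with $y_j=k_j+\max(0,y_{j-1}-\delta_{j-1})$ and the excess variables $E_j$; the backward implication then drops out in one line without any induction on the bin conditions themselves (the only induction is in verifying the recursive--analytic equivalence, which you correctly flag and which does check out). Your route buys a reusable standalone lemma --- it is precisely the formal version of the ``balls only spill rightward'' claim $k_{1:j}\ge S_{1:j}(\bm{k})$ that the paper asserts informally when motivating $S$ --- and it cleanly separates the structural property of $S$ from the particular event $S(\bm{C}_n)=\bm{\delta}$. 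The paper's induction is shorter in that it never leaves the given max--min formula, but proves only the implication needed for this specific configuration.
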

\begin{proof}
    We first start with the forward direction. Assume that $\event_{n, 1:\numos}$ holds. By definition,
    \begin{align*}
        \event_{n, 1:\numos} 
        \leftrightarrow 
        \bigcap_{j=1}^\numos \{\ballcount_{n, 1:j} \geq c_j\}
        \leftrightarrow 
        \bigcap_{j=1}^\numos \{\ballcount_{n, 1:j} \geq \sum_{i=1}^j \delta_i\}
        .
    \end{align*}
    As $( \sum_{i=1}^j \ballcount_{n, i} - \delta_{i} )$ is non-negative,
    \begin{align*}
        \delta_j 
        \leq 
        ( \sum_{i=1}^j \ballcount_{n, i} - \delta_{i} ) + \delta_j 
        \leq 
        \max_{j' \in [j]} \, ( \ballcount_{n, j:j'} - \sum_{i=j'}^{j-1} \delta_i )
        .
    \end{align*}
    Following this, we have for all $j \in [\numos]$:
    \begin{align*}
        S_j(\bm{C}_n) 
        = 
        \min \, ( \delta_j, \max_{j' \in [j]} \, ( \ballcount_{n, j':j} - \sum_{i=j'}^{j-1} \delta_i ) ) 
        = \delta_j
        .
    \end{align*}
    Now we prove the backwards direction; assume that $S_j(\bm{C}_n) = \delta_j$ for all $j$. We will show that $\bigcap_{i=1}^j \{S_i(\bm{C}_n) = \delta_i\}$ implies $\event_{n, 1:j}$ by induction. Starting with the base case $j = 1$:
    \begin{align*}
        \delta_1 
        &= 
        S_1(\bm{C}_n) 
        = \
        \min \, ( \delta_1, \max_{j' \in [1]} ( \ballcount_{n, j':1} - \sum_{i=j'}^{0} \delta_i ) ) 
        \\
        &\geq 
        \min \, (\delta_1, \ballcount_{n, 1})
        .
    \end{align*}
    Thus, $\ballcount_{n, 1} \geq \delta_1 = c_1$ and the base case holds. For the recursive case:
    %
    \begin{align*}
        &\delta_j = S_j(\bm{C}_n) = \min \, ( \delta_j, \max_{j' \in [j]} \, ( \ballcount_{n, j':j} - \sum_{i=j'}^{j-1} \delta_i ) )
        \\
        &\implies 
        \delta_j \leq \max_{j' \in [j]} \, ( \ballcount_{n, j':j} - \sum_{i=j'}^{j-1} \delta_i )
        .
    \end{align*}
    Adding $\sum_{i=1}^{j-1} \delta_i$ to both sides, we have:
    \begin{align*}
        &\delta_j + \sum_{i=1}^{j-1} \delta_i 
        \leq 
        \max_{j' \in [j]} \, ( \ballcount_{n, j':j} + \sum_{i=1}^{j'-1} \delta_i ) 
        \\
        &= 
        \max_{j' \in [j]} \, ( \ballcount_{n, j':j} + c_{j'-1} )
        \\
        &= 
        \max_{j' \in [j]} \, ( \ballcount_{n, 1:j} - (\ballcount_{n, 1:j'-1} - c_{j'-1} ) )
        .
    \end{align*}
    However, by strong induction we know that $\ballcount_{n, 1:j'} \geq c_{j'}$ for all $j' < j$, thus the last term is always non-negative. Hence, $j' = 1$ corresponds to the maximum and as $\delta_j + \sum_{i=1}^{j-1} \delta_i = c_j$, we finish with:
    \begin{align*}
        c_j \leq 
        \max_{j' \in [j]} \, \left(\ballcount_{n, 1:j} - \left( \ballcount_{n, 1:j'-1} - c_{j'-1}\right)\right) = \ballcount_{n, 1:j}
        .
    \end{align*}
\end{proof}

Having established $\mathbb{P}\left(S(\bm{C}_n)) = \bm{\delta}\right) = \mathbb{P}(\event_{n, 1:\numos})$, all that remains is to actually compute the probability of the former. In particular, we need to be able to compute this without implicitly computing the distribution of $\bm{C}_n$, as this would amount to the work that Boncelet Jr.'s algorithm already does. 
One approach is instead of recursively maintaining the distribution of $\bm{C}_i$ like in Boncelet Jr.'s algorithm, we instead track the distribution of $S(\bm{C}_i)$. Using the spilling analogy of $S$ as motivation, the recurrence relation comes naturally. To help with this, we define an additional set-valued function 
$\sigma(j, \bm{\kappa}) = \{j\} \cup { j' < j: \bigcap_{i = j'}^{j-1} \{\kappa_i = \delta_i\} }$ 
to denote the set of bins such that when a ball is thrown into any of these bins given $S(\bm{k}) = \bm{\kappa}$, the ball will land in or spill over into or past bin $I_j$ under transformation $S$. Note that to avoid confusion we use $\bm{k} \in [\numrv^*]^{\numos}$ and $\bm{\kappa} \in \bigotimes_{j=1}^{\numos} [\delta_j^*]$). There are only two ways to obtain the event $S(\bm{C}_i) = \bm{\kappa}$:
\begin{enumerate}
    \item If $S(\bm{C}_{i-1}) = \bm{\kappa}$ and ball $i$ is thrown into any of the bins indexed in $\sigma(\numos+1, \bm{\kappa})$---ball $i$ lands or spills into the untracked $I_{\numos+1}$.
    \item If $S(\bm{C}_{i-1}) = \bm{\kappa} - e_j$ and ball $i$ is thrown into any of the bins indexed in $\sigma(j, \bm{\kappa})$---ball $i$ fell directly or spilled over into $I_j$.
\end{enumerate}
\noindent With this observation, and the law of total probability, we can formally state our key recurrence relation to obtain $\prob(S(\bm{C}_i) = \bm{\kappa})$ as a function of the distribution of $S(\bm{C}_{i-1})$.

\begin{theorem}
    \label{thm: spillover recurrence}
    Let $\mathcal{C}, \bm{p}$ be as in \ijocdf($\mathcal{C}, \bm{p}$) and define $S$ to be as in Definition~\ref{def: spillover analytical}. Then:
    \begin{align}
        &\prob(S(\bm{C}_i) = \bm{\kappa}) = 
        \sum_{j' \in \sigma(\numos+1, \bm{\kappa})} p_{i, j'} \prob(S(\bm{C}_{i-1}) = \bm{\kappa}) \nonumber
        \\
        &+ 
        \sum_{j=1}^\numos \prob(S(\bm{C}_{i-1}) = \bm{\kappa} - e_{j}) \sum_{j' \in \sigma(j, \bm{\kappa})} p_{i, j'} 
        .
    \end{align}
\end{theorem}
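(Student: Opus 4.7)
The plan is to derive the recurrence by conditioning on where the $i$-th ball lands, using the independence of $X_i$ from $\bm{C}_{i-1}$. By the law of total probability, partitioning on the event $\{X_i \in I_{j'}\}$ for $j' \in [\numos+1]$, I can write
\begin{align*}
\prob(S(\bm{C}_i) = \bm{\kappa}) = \sum_{j'=1}^{\numos+1} p_{i,j'} \, \prob\bigl(S(\bm{C}_{i-1}+e_{j'}) = \bm{\kappa}\bigr),
\end{align*}
where I adopt the convention $e_{\numos+1} = \bm{0}$ since balls landing in the untracked overflow bin do not directly increment any tracked coordinate. The goal is then to show that for each $j'$, the probability on the right only depends on $\bm{C}_{i-1}$ through $S(\bm{C}_{i-1})$, and to match each term with the correct summand in the stated recurrence.

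The key structural lemma I would prove first is that $S(\bm{k}+e_{j'}) = S(S(\bm{k}) + e_{j'})$ for every $\bm{k} \in [\numrv^*]^{\numos}$ and every $j' \in [\numos+1]$. This is where the ``spilling'' interpretation of $S$ from Definition~\ref{def: spillover analytical} does the work: adding a ball to bin $j'$ and then re-applying the cascade cannot see any balls already spilled past their caps, so the cascade outcome only depends on the saturation pattern recorded in $S(\bm{k})$. I would verify this from the analytic formula by a short case analysis on whether the maximum in $S_j(\bm{k}+e_{j'})$ is attained at an index before or after $j'$, using the fact that saturated prefixes of $S(\bm{k})$ correspond exactly to indices $i$ with $S_i(\bm{k}) = \delta_i$.

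With this reduction in hand, I then analyze the deterministic map $\bm{\kappa}' \mapsto S(\bm{\kappa}' + e_{j'})$ and characterize its effect. Given $S(\bm{C}_{i-1}) = \bm{\kappa}'$, a ball thrown into bin $j'$ cascades upward through the contiguous run of saturated bins starting at $j'$, and comes to rest at the first bin $j \ge j'$ with $\kappa'_j < \delta_j$ (or spills into $I_{\numos+1}$ if no such $j \le \numos$ exists). This is precisely the condition $j' \in \sigma(j,\bm{\kappa}' + e_j)$ when $j \le \numos$, and $j' \in \sigma(\numos+1,\bm{\kappa}')$ in the overflow case. Hence $S(\bm{C}_{i-1}+e_{j'}) = \bm{\kappa}$ occurs either when $\bm{\kappa}' = \bm{\kappa}$ and $j' \in \sigma(\numos+1,\bm{\kappa})$, or when $\bm{\kappa}' = \bm{\kappa} - e_j$ for some $j \in [\numos]$ and $j' \in \sigma(j,\bm{\kappa})$. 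Grouping the law-of-total-probability sum by these two patterns yields exactly the two double sums in the theorem statement.

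The main obstacle is the structural lemma $S(\bm{k}+e_{j'}) = S(S(\bm{k})+e_{j'})$: it is visually obvious from the spilling picture but needs a clean argument from the $\min/\max$ formula, because the maxima defining $S_j$ range over all suffix sums of $\bm{k}$, not just over the reduced vector. I would prove it by induction on $j$, showing that once the prefix of $S(\bm{k})$ up through index $j-1$ is saturated in the same pattern as that of $\bm{k}$, the contribution of the extra ball at $j'$ to the $\max$ in position $j$ is identical in both expressions. Once this is established, the rest of the proof is bookkeeping: rewriting the ``where did the new ball end up'' indicator via $\sigma$, and separating the $j=\numos+1$ overflow term from the tracked $j \in [\numos]$ terms to exhibit the claimed recurrence.
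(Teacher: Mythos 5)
Your proposal is correct and follows essentially the same route as the paper, which justifies the recurrence by the same two-case observation (ball $i$ lands or spills into $I_j$ versus into the untracked $I_{\numos+1}$) together with the law of total probability. You go further than the paper by explicitly isolating and proposing to prove the structural identity $S(\bm{k}+e_{j'}) = S(S(\bm{k})+e_{j'})$, which the paper leaves implicit but which is exactly what makes the dynamic program well-defined on the compressed state $S(\bm{C}_{i-1})$.
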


\subsection{Algorithm and Performance Analysis}

We begin by defining dynamic programming tables $T_i : \bigotimes_{j \in \numos} [\delta_j^*] \to [0, 1]$ for $i \in [\numrv^*]$ with the intention that with the recurrence relation from above, we have $T_i(\bm{\kappa}) = \prob(S(\bm{C}_i) = \bm{\kappa})$. We describe our procedure in Algorithm~\ref{alg:spillover dynamic program}.

\begin{algorithm}
\caption{Spilling Algorithm to solve \ijocdf($\mathcal{C}, \bm{p}$)}
\label{alg:spillover dynamic program}
\begin{algorithmic}[1]
\item[]
\Require{$\numrv, \numos \in \mathbb{N}, \numrv \geq \numos, \mathcal{C} = (c_1,\ldots,c_{\numos}) \in [\numrv]^{\numos}, \bm{p} \in [0, 1]^{\numrv \times (\numos+1)}$}
\Ensure{$\prob \left( \event_{\numrv, 1:\numos} \right)$}
  \State{$\delta_j = c_j - c_{j-1}, \forall j \in [\numos]$ for $c_{j-1} = 0$}
  \State{Define tables $T_i : \bigotimes_{j \in \numos} [\delta_j^*] \to [0, 1], \forall i \in [\numrv^*]$}
  \State {Initialize $T_0(\bm{\kappa})$ $\gets 1$ if $\bm{\kappa} = \bm{0}$, else 0 for $\bm{\kappa} \in \bigotimes_{j \in \numos} [\delta_j^*]$}
  \For{$i \gets 1$ to $\numrv$}
    \State{$\sigma(j, \bm{\kappa}) \doteq \{j\} \cup \Big\{j' < j: \bigcap_{i = j'}^{j-1} \{\kappa_i = \delta_i\} \Big\}, \forall j \in [\numos^+]$}
    \State {$T_i(\bm{\kappa}) = \sum_{j' \in \sigma(\numos+1, \bm{\kappa})} p_{i, j'} T_{i-1}(\bm{\kappa}) + \sum_{j=1}^\numos T_{i-1}(\bm{\kappa} - e_{j}) \sum_{j' \in \sigma(j, \bm{\kappa})} p_{i, j'}$}
  \EndFor
  \State \textbf{return} $\prob \left( \event_{\numrv, 1:\numos} \right) = T_\numrv(\bm{\delta})$ for $\bm{\delta} = (\delta_1,\ldots,\delta_{\numos})$ 
\end{algorithmic}
\end{algorithm}

While there are $\numrv+1$ tables of size 
$O ( \prod_{j \in \numos} (1 + \delta_j) )$, 
table $T_i$ may be deleted once $T_{i+1}$ is computed, hence the memory requirement is 
$O ( \prod_{j \in \numos} (1 + \delta_j) )$. 
This is a significant improvement over Boncelet Jr.'s algorithm, which requires $O(\numrv^{\numos})$ memory. Each entry of table $T_i$ for $i \in [\numrv]$ requires summing over $O(\numos)$ terms as per Theorem~\ref{thm: spillover recurrence}, which yields a time complexity of 
$O (\numos^2 \numrv \prod_{j \in \numos} (1 + \delta_j) )$.
The worst case improvement corresponds to when the order statistics given by $\mathcal{C}$ are evenly spaced, which yields $\delta_j \approx \frac{\numrv}{\numos}$ for all $j \in [\numos]$. Here, one can take the ratio of Boncelet Jr.'s time and space complexities versus that of Algorithm~\ref{alg:spillover dynamic program} to obtain a $O(\frac{d^{d-1}}{\numrv})$ and $O(\numos^{\numos})$ factor of improvement in time and space complexity respectively. 

One possible modification to our algorithm pre-computing and store the values of $\sum_{j \in \sigma(j', \bm{\kappa})} p_{i, j}$, which saves a factor of $\numos$ computations, at a cost of a factor of $\numos$ memory. Pre-computing these probability sums yields a total time complexity of 
$O ( \numos \numrv \prod_{j \in \numos} (1 + \delta_j) )$ 
and space complexity of 
$O ( \numos \prod_{j \in \numos} (1 + \delta_j) )$. 
We can further improve the algorithm's performance by a constant factor by ignoring intermediary entries $T_i(\bm{\kappa})$ where the event that $S(\bm{C}_i) = \bm{\kappa}$ guarantees that $S(\bm{C}^\numrv)$ cannot possibly be equal to $\bm{\delta}$. This is the case if $\sum_{j=1}^\numos (\kappa_j - \delta_j) > \numrv - i$, as there are not enough balls to fill the remaining bins. With either algorithm modification, the spilling algorithm is efficient in cases where the desired order statistic indices are close together and where $\numos$ is relatively small.

\subsection{Experiments}

To verify our algorithm's performance guarantees, and directly compare with Boncelet Jr.'s algorithm, we conduct two experiments which varies the values of $\numrv, \numos, \mathcal{C}$. We use the version of the algorithm without pre-computing the probability sums but disregarding irrelevant intermediary entries. The experiments were run a computer with an Intel Xeon E3-1240V5 CPU with 15.6 GB of memory.
\if 0
\begin{enumerate}
    \item For our first experiment, we let $\numrv \in \{100, 300, 900\}, c_1 \in [250], \numos = 1$.
    \item We copy one of the experiments from \cite{galgana} and vary both $\numrv$ and $\numos$. For this, we will choose $\numrv \in \{6, 12, 18, 24, 30\}$ and $\mathcal{C} \in \{[\numos]\}$ for $\numos \in [5]$.
\end{enumerate}
\fi
\begin{figure}[t]
    \centering
    \includegraphics[scale=0.2]{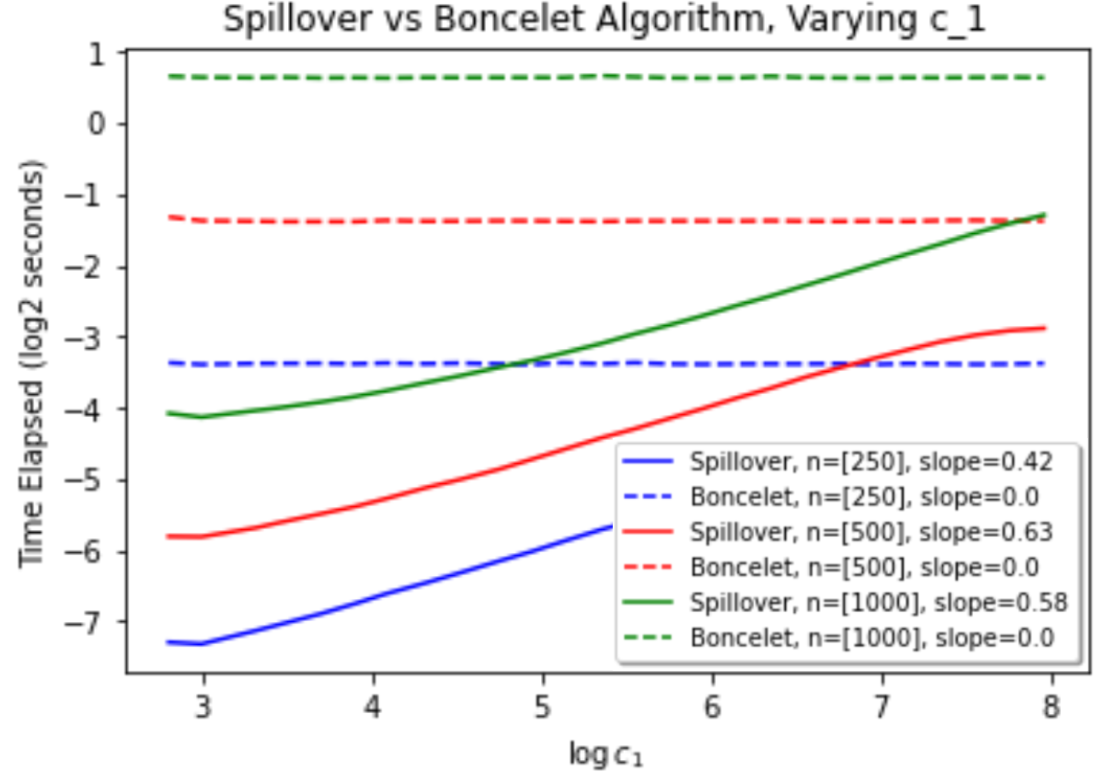}
    \caption{\small For our first experiment, we let $\numrv \in \{250, 500, 1000\}, c_1 \in [100], \numos = 1$. The slope of the spillover run-times indicate a worst case linear function of $c_1 = \delta_1$. The equally sized parallel shifts in times when varying $n$ by a multiplicative factor suggests polynomial dependence in $n$.}
\label{fig:experiment_1}
\end{figure}

\begin{table}
\small
%
\begin{tabular}{@{}lllllll@{}}
\toprule
$n$ & $6$ & $12$ & $18$ & $24$ & $30$ \\
\midrule
$d = 1$ & 6.9E-5 & 1.1E-4 & 2.0E-4 & 1.7E-4 & 1.8E-4  \\
$d = 2$ & 8.5E-4 & 1.6E-4 & 2.3E-4 & 2.7E-4 & 3.7E-4  \\
$d = 3$ & 1.8E-4 & 3.1E-4 & 4.4E-4 & 5.8E-4 & 7.6E-4  \\
$d = 4$ & 3.4E-4 & 6.4E-4 & 9.5E-4 & 1.4E-3 & 1.6E-3  \\
$d = 5$ & 7.2E-4 & 1.4E-3 & 2.1E-3 & 2.8E-3 & 3.7E-3  \\
$d = 6$ & 1.5E-3 & 3.0E-3 & 5.6E-3 & 6.5E-3 & 7.9E-3  \\ \bottomrule
\end{tabular}
\caption{\small
\textbf{Spillover Algorithm Elapsed Time}. We ran an experiment from Galgana and Shi, et. al. (2021) with $\numrv \in \{6, 12, 18, 24, 30\}$ and $\mathcal{C} \in \{[\numos]\}$ for $\numos \in [6]$.}
\label{Spillover-times}
\end{table}

%
%
\begin{table}
\small
\begin{tabular}{@{}lllllll@{}}
\toprule
$n$ & $6$ & $12$ & $18$ & $24$ & $30$ \\
\midrule
$d=1$ & 1.1E-4 & 4.0E-4 & 5.8E-4 & 1.1E-3 & 1.6E-3 \\ 
$d=2$ & 6.4E-4 & 2.8E-3 & 7.3E-3 & 1.7E-2 & 3.0E-2 \\ 
$d=3$ & 1.3E-3 & 1.4E-2 & 6.1E-2 & 1.7E-1 & 4.1E-1 \\ 
$d=4$ & 3.8E-3 & 6.5E-2 & 3.8E-1 & 1.4E0 & 4.0E0 \\ 
$d=5$ & 9.4E-3 & 2.4E-1 & 2.0E0 & 9.4E0 & 3.2E1 \\ 
$d=6$ & 2.1E-2 & 8.4E-1 & 8.8E0 & 5.3E1 & 2.2E2
\\\bottomrule
\end{tabular}
\caption{\small \textbf{Boncelet Jr.'s Algorithm Elapsed Time}. Our algorithm is significantly faster than Boncelet Jr.'s especially for larger $\numrv$ or $\numos$.}
\label{Boncelet-times}
\end{table}

\section{Extensions}
\label{sec:extensions}

One of the key advantages to using Boncelet Jr.'s algorithm over its competitors is its flexibility to handle dependent random variables. 
Specifically, Boncelet Jr.'s algorithm allows for dependency between random variables, albeit with additional memory and computational cost. 
This flexibility can be especially useful for various stochastic processes such as Markov processes or more general autoregressive models 
A common application of computing order statistic distributions for dependent random variables is estimating the maximum length of a queue. The generalization to dependent random variables requires some changes to the information stored in the dynamic programming tables, as we will need to understand the conditional distribution of balls yet to be thrown given the placement of the balls already thrown. Of course, the task of tracking the exact placement of each ball was one that we wanted to avoid in the first place. In this section, we show that, in cases of sparse random variable dependency structures, then we can limit our memory of exact placement to a small subset of the balls.

\subsection{Preliminaries}

Given $\numrv$ random variables and their joint distribution $\bm{F}$, we can construct the corresponding Markov random field (MRF). In order to minimize the number of random variables tracked in our algorithm, we utilize the local Markov property and the structure of the conditional dependencies in the MRF. We adopt most of the combinatorial notation from the previous section, however we will use an additional partitioning of each bin to simulate exact placement of each random variable. Letting node $i$ of the MRF represent variable $X_i$: \footnote{We can instead let node $i$ represent a different variable $v_i$, which creates a different visitation scheme $\bm{v}$, neighbor sets, and boundary sets. Consequently, the choice of visitation scheme also impacts the algorithm's space and time complexity, and in general, computing the optimal re-ordering is NP-hard. For simplicity, we stick to the original ordering.}
\begin{enumerate}
    \item Define micro-bins and micro-bounds $I_{j, h}$ and $x_{j, h}$ for $j \in [\numos^+]$ and $h \in [H]$, for granularity factor $H \in \mathbb{N}$. Here, $x_j = x_{j, 0} \leq \ldots, \leq x_{j, H} = x_{j+1}$ and $I_{j, h} = (x_{j, h-1}, x_{j, h}]$.
    \item Define the vector $\mathcal{N}_i \in ([\numos^+] \times [H])^{n(i)}$ to denote the micro-bin locations of the neighboring set of $i$---the lower-indexed neighbors of ball $i$. By the local Markov property, $i$ is conditionally independent of $\{1,\ldots,i-1\}$ given $\mathcal{N}_i$. Here, $n(i)$ denotes the number of lower-indexed neighbors of $i$.
    \item Define the vector $\mathcal{B}_i \in ([\numos^+] \times [H])^{b(i)}$ to denote the micro-bin locations of the boundary set of $i$---the neighbors of balls $i,\ldots,\numrv$ among the first $i-1$ balls. Again by the local Markov property, balls $i$ through $\numrv$ are conditionally independent of $\{1,\ldots,i-1\}$ given $\mathcal{B}_i$. Here, $b(i)$ denotes the number of neighbors of balls $i$ through $\numrv$ among the first $i-1$ balls.
    \if 0
    \begin{figure}
        \centering
        \includegraphics[scale=0.3]{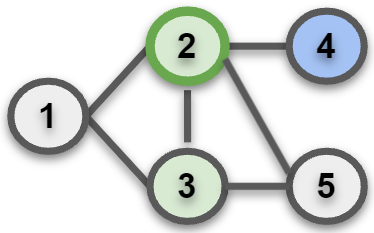}
        \caption{\small In the above example graph, the neighboring set of 4 is $\{2\}$ and the boundary set is $\{2, 3\}$, denoted by the dark green outline and light green highlight respectively. Here, node 3 is in the boundary set as it is a neighbor of one of nodes 4 or 5, in contrast with node 1.}
    \label{fig:example}
    \end{figure}
    \fi
    \item Define the vector-valued transform $N_i : ([\numos^+] \times [H])^{b(i)} \to ([\numos^+] \times [H])^{n(i)}$ to select only the entries of $\mathcal{B}_i \in ([\numos^+] \times [H])^{b(i)}$ corresponding to the neighbors of $i$, namely $\mathcal{N}_i$. That is, $N_i(\mathcal{B}_i) = \mathcal{N}_i$.
    \item Define $\bm{p}_i(\cdot) \doteq \bm{p}_i(\cdot \mid N_i(\mathcal{B}_i) = \bm{j}_{\mathcal{N}}) \doteq \bm{p}_i(\cdot \mid \mathcal{N}_i = \bm{j}_{\mathcal{N}})$ to be the conditional distribution of ball $i$ given $\{\mathcal{N}_i = \bm{j}_{\mathcal{N}}\}$. Here, $\bm{j}_{\mathcal{N}} \in ([\numos^+] \times [H])^{n(i)}$. In practice, $p_i(\cdot)$ can be estimated through MCMC-sampling methods or by numerical integration and marginalizing out its higher-indexed neighbors. If the random variables are discrete, it may be computed exactly. 
\end{enumerate}

With our new notation, we can formally define the dependent version of \ijocdf({$\mathcal{C}, \bm{p}$}).

\begin{definition}
Assuming random variables $X_1,\ldots,X_{\numrv}$, we let $\numrv, \numos, \bm{\rvv}, \mathcal{C}, \bm{F}$ be as in Definition~\eqref{def:jocdf_cts}. We define $\bm{p} = (p_{i, j, h}(\bm{j}_{\mathcal{N}}))_{i \in [\numrv], j \in [\numos^+], h \in [H], \bm{j}_{\mathcal{N}} \in ([\numos^+] \times [H])^{n(i)}} = (F_i(\rvv_j \mid \mathcal{N}_i = \bm{j}_{\mathcal{N}}) - F_i(\rvv_{j-1} \mid \mathcal{N}_i = \bm{j}_{\mathcal{N}}))_{i \in [\numrv], j \in [\numos], h \in [H], \bm{j}_{\mathcal{N}} \in ([\numos^+] \times [H])^{n(i)}}$. That is, $p_{i, j, h}(\bm{j}_{\mathcal{N}})$ is the probability that ball $i$ falls into micro-bin $I_{j, h}$ given $\mathcal{N}_i = \bm{j}_{\mathcal{N}}$. We define the combinatorial problem \djocdf $\left(\mathcal{C}, \bm{p}\right)$ as computing the probability $\prob \left( \event_{\numrv, 1:\numos} \right)$.
\end{definition}


At first, it seems there are only three additions to Algorithm~\ref{alg:spillover dynamic program} required to solve Equation~\eqref{eq:jocdf_cts} with dependent random variables. First, we must further sub-divide each bin. Second, we need our algorithm to store $\mathcal{N}_i$ in micro-bin space by iteration $i$. Third, we need to change our definition of $\bm{p}_i$ to reflect the dependency between the current random variable and those in $\mathcal{N}_i$. In particular, instead of having $\bm{p}_i: [\numos] \to [0, 1]$ as the singular distribution of the current ball over all bins, we can instead have a distribution $\bm{p}_i(\cdot) = \bm{p}_i(\cdot \mid \mathcal{N}_i = \bm{j}_{\mathcal{N}})$. However, these three changes are insufficient as the algorithm must also track the location of balls that will be needed to compute the conditional distribution of not only ball $i$ but also $i+1,\ldots,\numrv$. More specifically, our algorithm must not ``forget'' the location of the balls in $\mathcal{B}_i$ by storing the configuration in memory at each time step $i$.

\subsection{Updated Recurrence}

The updated recurrence relation is similar to Theorem~\eqref{thm: spillover recurrence}, except now we also need to take into account $\mathcal{B}_{i}$ and how they both impact the distribution of ball $i$ and also the placement of balls $i+1$ through $\numrv$. Moreover, we will also benefit from defining set-valued function $\psi_i$ that takes argument $\bm{j}_{\mathcal{B}} \in ([\numos^+] \times [H])^{b(i+1)}, \bm{\kappa} \in \bigotimes_{j \in \numos} [\delta_j^*]$ and returns the set of $\bm{j}_{\mathcal{B}}' \in ([\numos^+] \times [H])^{b(i)}$ such that $\{\mathcal{B}_i = \bm{j}_{\mathcal{B}}'\}$, $\{\mathcal{B}_{i+1} = \bm{j}_{\mathcal{B}}\}$, and $\{S(\bm{C}_i) = \bm{\kappa}\}$ are non-disjoint events---the ball configurations across the three events are consistent. Lastly, we must make an important distinction between when $i$ is in the boundary set of $i+1$. As such, we define a piece-wise function $\gamma_i$ that takes as arguments $\bm{j}_{\mathcal{B}} \in ([\numos^+] \times [H])^{b(i+1)}, \bm{j}_{\mathcal{B}}' \in ([\numos^+] \times [H])^{b(i)}, \bm{\kappa} \in \bigotimes_{j \in \numos} [\delta_j^*]$. If $i$ is in the boundary set of $i+1$, then let $j, h$ denote the last entry of $\bm{j}_{\mathcal{B}}$ and set $\gamma_i(\bm{j}_{\mathcal{B}}, \bm{j}_{\mathcal{B}}', \bm{\kappa}) = p_{i, j, h}(N_i(\mathcal{B}_i) = \bm{j}_{\mathcal{B}}')$. Otherwise, we set it to $\sum_{j' \in \sigma(j, \bm{\kappa})} \sum_{h=1}^H p_{i, j', h}(N_i(\mathcal{B}_i) = \bm{j}_{\mathcal{B}}')$. As ball $i$ is conditionally independent of the event $S(\bm{C}_i) = \bm{\kappa}_i$ given $\{N_i(\mathcal{B}_i) = \bm{j}_{\mathcal{N}}\}$ or $\{\mathcal{N}_i = \bm{j}_{\mathcal{N}}\}$, the following recurrence relation follows from the law of total probability, much like in Equation~\eqref{thm: spillover recurrence}:

\begin{theorem}
    \label{thm: spillover recurrence dependent not in}
    Let $\mathcal{C}, \bm{p}$ be as in \djocdf($\mathcal{C}, \bm{p}$). Define $S$ to be as in Definition~\ref{def: spillover analytical} and $\mathcal{N}_i, \mathcal{B}_i, \psi_i, \gamma_i$ to be as in the previous section. Then:
    %
    Then:
    \begin{align}
        &\prob \left(S(\bm{C}_i) = \bm{\kappa}, \mathcal{B}_{i+1} = \bm{j}_{\mathcal{B}} \right) 
        \\
        \nonumber
        =&
        \sum_{j' \in \sigma(\numos + 1, \bm{\kappa})}
        \sum_{\bm{j}_{\mathcal{B}}' \in \psi_i \left( \bm{j}_{\mathcal{B}}, \bm{\kappa} \right)}
        \prob \left( S \left( \ballcount^{i-1} \right) = \bm{\kappa}, \mathcal{B}_{i} = \bm{j}_{\mathcal{B}}' \right) 
        \\
        \nonumber
        &\qquad \gamma_i(\bm{j}_{\mathcal{B}}, \bm{j}_{\mathcal{B}}', \bm{\kappa})
        \\
        \nonumber
        &+ 
        \sum_{j=1}^{\numos} 
        \sum_{\bm{j}_{\mathcal{B}}' \in \psi_i \left( \bm{j}_{\mathcal{B}},
        \bm{\kappa} \right)} 
        \prob \left( S \left( \ballcount^{i-1} \right) = \bm{\kappa} - e_{j}, \mathcal{B}_{i} = \bm{j}_{\mathcal{B}}' \right) 
        \\
        \nonumber
        &\gamma_i(\bm{j}_{\mathcal{B}}, \bm{j}_{\mathcal{B}}', \bm{\kappa})
        .
    \end{align}
\end{theorem}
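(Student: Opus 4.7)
The plan is to derive the recurrence by conditioning on the joint state at step $i{-}1$, namely the pair $(S(\bm{C}_{i-1}), \mathcal{B}_i)$, and then accounting for where ball $i$ lands. The structural parallel with Theorem~\ref{thm: spillover recurrence} is that the only ways to arrive at $\{S(\bm{C}_i)=\bm{\kappa}\}$ are (a) from $\{S(\bm{C}_{i-1})=\bm{\kappa}\}$ via a throw whose bin either is $I_{\numos+1}$ or spills past bin $\numos$ (indexed by $\sigma(\numos+1,\bm{\kappa})$), or (b) from $\{S(\bm{C}_{i-1})=\bm{\kappa}-e_j\}$ via a throw that lands in or spills into bin $I_j$ (indexed by $\sigma(j,\bm{\kappa})$), for some $j\in[\numos]$. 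The only new bookkeeping is the boundary-set coordinate, which must be updated in a way consistent with both the previous boundary $\mathcal{B}_i$ and the new ball's micro-bin location.

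First I would write, by the law of total probability on the partition induced by $(S(\bm{C}_{i-1}),\mathcal{B}_i)$,
\begin{align*}
&\prob\bigl(S(\bm{C}_i)=\bm{\kappa},\, \mathcal{B}_{i+1}=\bm{j}_{\mathcal{B}}\bigr)
\\ &= \sum_{\bm{\kappa}'}\sum_{\bm{j}_{\mathcal{B}}'}
\prob\bigl(S(\bm{C}_i)=\bm{\kappa},\, \mathcal{B}_{i+1}=\bm{j}_{\mathcal{B}} \,\big|\, S(\bm{C}_{i-1})=\bm{\kappa}',\, \mathcal{B}_i=\bm{j}_{\mathcal{B}}'\bigr)\,\prob\bigl(S(\bm{C}_{i-1})=\bm{\kappa}',\, \mathcal{B}_i=\bm{j}_{\mathcal{B}}'\bigr),
\end{align*}
and then use the definition of $\psi_i$ to restrict the inner sum to those $\bm{j}_{\mathcal{B}}'$ that are consistent with the target $\bm{j}_{\mathcal{B}}$ and $\bm{\kappa}$. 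Only two values of $\bm{\kappa}'$ contribute for each $j$-decomposition of ball $i$, namely $\bm{\kappa}'=\bm{\kappa}$ (when ball $i$'s bin is in $\sigma(\numos+1,\bm{\kappa})$) and $\bm{\kappa}'=\bm{\kappa}-e_j$ (when ball $i$'s bin is in $\sigma(j,\bm{\kappa})$ for some $j\in[\numos]$); this split yields exactly the two outer sums in the stated recurrence.

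Next I would invoke the local Markov property to rewrite the conditional probability of ball $i$'s micro-bin. Given $\{\mathcal{B}_i=\bm{j}_{\mathcal{B}}'\}$, ball $i$ depends on the history only through $\mathcal{N}_i=N_i(\bm{j}_{\mathcal{B}}')$, so its conditional micro-bin law is $p_{i,\cdot,\cdot}(\,\cdot\mid N_i(\mathcal{B}_i)=N_i(\bm{j}_{\mathcal{B}}'))$. I then split into the two cases packaged by $\gamma_i$: if $i$ belongs to the boundary of some later ball (equivalently, its micro-bin is recorded as the last coordinate of $\bm{j}_{\mathcal{B}}$), then the micro-bin $(j,h)$ is pinned by $\bm{j}_{\mathcal{B}}$, so the relevant probability is a single term $p_{i,j,h}(N_i(\bm{j}_{\mathcal{B}}'))$; otherwise, ball $i$'s micro-bin is marginalised, and the relevant probability is $\sum_{j'\in\sigma(j,\bm{\kappa})}\sum_{h=1}^H p_{i,j',h}(N_i(\bm{j}_{\mathcal{B}}'))$ (with $j$ replaced by $\numos+1$ in the first outer sum). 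This is precisely $\gamma_i(\bm{j}_{\mathcal{B}},\bm{j}_{\mathcal{B}}',\bm{\kappa})$, and substituting yields the claimed identity.

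The main obstacle I anticipate is the verification that $\psi_i$ correctly captures consistency among three coupled events: the previous boundary $\mathcal{B}_i$, the new boundary $\mathcal{B}_{i+1}$, and the spilled-count vector $S(\bm{C}_i)=\bm{\kappa}$. In particular, one must check that (i) every $\bm{j}_{\mathcal{B}}'$ outside $\psi_i(\bm{j}_{\mathcal{B}},\bm{\kappa})$ contributes a zero conditional probability (so excluding them is loss-free), and (ii) within $\psi_i$, the two cases of $\gamma_i$ correctly distinguish between recording ball $i$'s micro-bin in $\bm{j}_{\mathcal{B}}$ and merely updating which boundary entries are retained or dropped as we advance from $i$ to $i+1$. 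Once this bookkeeping lemma about $\psi_i$ and $\gamma_i$ is established, the rest of the derivation is a direct copy of the independent-case argument, adapted at each step by carrying the extra boundary conditioning through the local Markov factorisation.
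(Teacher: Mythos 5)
Your proposal follows the same route the paper takes: the paper offers no formal proof of this theorem, only the one-sentence justification that the recurrence ``follows from the law of total probability'' together with the conditional independence of ball $i$ from the history given $\mathcal{N}_i = N_i(\mathcal{B}_i)$, exactly the decomposition over $(S(\bm{C}_{i-1}),\mathcal{B}_i)$ and local-Markov factorisation you describe. Your expansion --- including the restriction to $\psi_i$-consistent boundary configurations and the two-case packaging by $\gamma_i$ --- is a faithful and more detailed version of that argument, and the bookkeeping check you flag as the remaining obstacle is likewise left unverified in the paper itself.
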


\subsection{Algorithm and Performance Analysis}

We begin by defining dynamic programming tables $T_i : \bigotimes_{j \in \numos} [\delta_j^*] \times ([\numos^+] \times [H])^{b(i)} \to [0, 1]$ for $i \in [\numrv]$ with the intention that with the recurrence relation from above, $T_i(\bm{\kappa}, \bm{j}_{\mathcal{B}}) = \prob(S(\bm{C}_i) = \bm{\kappa}, \mathcal{B}_{i+1} = \bm{j}_{\mathcal{B}})$. We describe our procedure in Algorithm~\ref{alg:spillover dynamic program dependent}.

\begin{algorithm}
\caption{Spilling Algorithm to solve \djocdf($\mathcal{C}, \bm{p}$)}
\label{alg:spillover dynamic program dependent}
\begin{algorithmic}[1]
\item[]
\Require{$\numrv, \numos, H \in \mathbb{N}, \numrv \geq \numos, \mathcal{C} \in [\numrv]^{\numos}, \bm{p} \in [0, 1]^{\numrv \times (\numos+1) \times H \times (H(\numos+1))^{n(i)}}$}
\Ensure{$\prob \left( \event_{\numrv, 1:\numos} \right)$}
  \State{$\delta_j = c_j - c_{j-1}, \forall j \in [\numos]$ for $c_{j-1} = 0$}
  \State{Define tables $T_i : \bigotimes_{j \in \numos} [\delta_j^*] \to [0, 1], \forall i \in [\numrv^*]$}  
  \State {Initialize $T_0(\bm{\kappa}, \emptyset)$ $\gets 1$ if $\bm{\kappa} = \bm{0}$, else 0 for $\bm{\kappa} \in \bigotimes_{j \in \numos} [\delta_j^*]$}
  \For{$i \gets 1$ to $\numrv$}
    \For{$\bm{j} \in [H(\numos+1)]^{b(i+1)}$}
        \State{$\alpha 
        = 
        \sum_{j' \in \sigma \left( \numos + 1, \bm{\kappa} \right)}
        \sum_{\bm{j}_{\mathcal{B}}' \in \psi_i \left( \bm{j}_{\mathcal{B}}, \bm{\kappa} \right) } 
        T_i \left( \bm{\kappa}, \bm{j}_{\mathcal{B}}' \right) 
        \allowbreak
        \gamma_i(\bm{j}_{\mathcal{B}}, \bm{j}_{\mathcal{B}}', \bm{\kappa}) 
        $}
        \State{$\beta 
        =
        \sum_{j=1}^{\numos} 
        \sum_{\bm{j}_{\mathcal{B}}' \in \psi_i \left( \bm{j}_{\mathcal{B}}, \bm{\kappa} \right)} 
        T_i \left( \bm{\kappa} - e_{j}, \bm{j}_{\mathcal{B}}' \right) 
        \allowbreak
        \gamma_i(\bm{j}_{\mathcal{B}}, \bm{j}_{\mathcal{B}}', \bm{\kappa})
        $}
        \State{$T_i \left( \bm{\kappa}, \bm{j}_{\mathcal{B}} \right) = 
        \alpha + \beta$}
    \EndFor
  \EndFor
  \State \textbf{return} $\prob \left( \event_{\numrv, 1:\numos} \right) = T_\numrv(\bm{\delta}, \emptyset)$ for $\bm{\delta} = (\delta_1,\ldots,\delta_{\numos})$
\end{algorithmic}
\end{algorithm}

Just like in Algorithm~\ref{alg:spillover dynamic program},
table $T_i$ is disposable once table $T_{i+1}$ is computed. Hence, the space complexity is bounded by size of the largest table, which is of size 
$O ( (H\numos)^{b(*)} \prod_{j \in \numos} (1 + \delta_j) )$,
where $b(*) = \max_{i \in [\numrv]} b(i)$.
We note that $b(*)$ is dependent on the graph visitation scheme, which we assume to be simply $1,\ldots,\numrv$ wlog. Intuitively, the memory required grows in the size of the largest boundary set. 
With regards to the time complexity, we have that the updating step to compute $T_i(\bm{\kappa}, \bm{j}_{\mathcal{B}})$ requires a summation over $O(\numos)$ possible values of $j'$, $O(|\psi_i(\bm{j}_{\mathcal{B}}, \bm{\kappa})|)$ values of $\bm{j}_{\mathcal{B}}'$, and another $O(\numos)$ possible values of $j$. Here, the quantity $|\psi_i(\bm{j}_{\mathcal{B}}, \bm{\kappa})|$ is difficult to measure, so we upper bound it simply by $O((H\numos)^{\Delta_i})$, where $\Delta_i$ denotes the number of variables included in the boundary set of $i$ but not of $i+1$. 
Notice that  $\Delta_i \leq b(i) \leq b(*)$. 
With this, since there are 
$O ( (H\numos)^{b(*)} \prod_{j \in \numos} (1 + \delta_j) )$ 
table entries in $T_i$ to compute for each $i \in [\numrv]$, we have a total time complexity of 
$O ( \numrv(H\numos)^{1+2b(*)} \prod_{j \in \numos} (1 + \delta_j) )$.

\subsection{Experiments}

We experimentally compare our algorithm's output against the Monte Carlo approximation of the joint cdf of Markov random variables using the same computer system described earlier. In particular, we define the random walk $X_1,\ldots,X_\numrv$ such that $X_0 = 0$ and $X_{i+1} \in \{X_i - 1, X_i, X_i + 1\}$ with mean $\mu$, for $i \in [\numrv]$, $\numrv \in [30,...,365]$. 
We computed the probability $\prob(X_{(\lfloor\frac{90n}{100}\rfloor)} \leq 1, X_{(\lfloor\frac{95n}{100}\rfloor)} \leq 2, X_{(\lfloor\frac{99n}{100}\rfloor)} \leq 3)$. 
We note that with sufficiently large $\numrv$, the biased random walk almost surely converges to a geometric Brownian motion, which gives our algorithm many practical applications. One such application is managing downside risk in financial portfolios. More specifically, we define $X_i$ to be the aggregate \textit{loss} of a portfolio after $i$ time steps.  
See Figure~\ref{fig:experiment_3}.
With $\mathcal{C} = \{\lfloor\frac{90n}{100}\rfloor, \lfloor\frac{95n}{100}\rfloor, \lfloor\frac{99n}{100}\rfloor\}$, $\bm{x} = (3, 5, 10)$, \djocdf($\mathcal{C}, \bm{p}$) models the probability of exceeding a maximum downside risk constraint corresponding to the 90th percentile, 95th percentile, and 99th percentile losses in a portfolio.

\begin{figure}[t]
    \centering
    \includegraphics[scale=0.19]{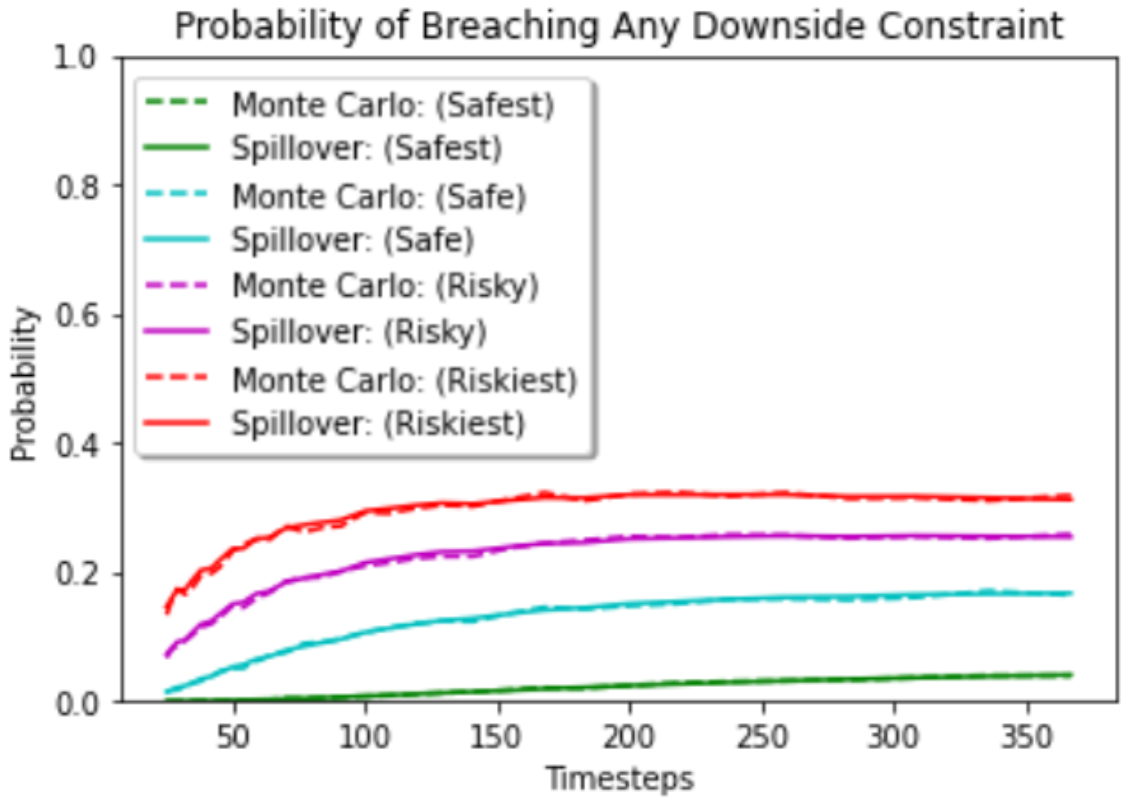}
    \caption{\small For our third experiment, we define 4 portfolios with expected daily returns of 1 (safest), 2 (safe), 3 (risky), and 4 (riskiest) basis points respectively and annualized Sharpe ratio of 1. We plot the probability of violating any of the downside risk constraints over varying time horizons. Observe that the Monte Carlo (with 10000 trials) track our algorithm's output very closely.}
\label{fig:experiment_3}
\end{figure}

\section{Conclusion}
\label{sec:conclusion}

In this paper, we describe the computation of the joint cdf of $d$ order statistics of $\numrv$ random variables; in particular, we focus on the method provided by \cite{boncelet1987algorithms}, and show how the computational complexity of Boncelet Jr.'s algorithm can be improved by compressing dynamic programming tables to store only information pertinent to computing the cdf rather than the entire joint distribution of the ball configurations. We extend our procedure to handle dependent random variables and show how this affects the space and time complexity. In the worst case of evenly spread $\numos$ order statistics, our algorithm improves the time over Boncelet Jr.'s by a factor of $O(\frac{\numos^{\numos-1}}{\numrv})$, and space complexity by $O( \numos^{\numos} )$.

\if 0

\section{GENERAL FORMATTING INSTRUCTIONS}

The camera-ready versions of the accepted papers are 8 pages,
plus any additional pages needed for references.

Papers are in 2 columns with the overall line width of 6.75~inches (41~picas).
Each column is 3.25~inches wide (19.5~picas).  The space
between the columns is .25~inches wide (1.5~picas).  The left margin is 0.88~inches (5.28~picas).
Use 10~point type with a vertical spacing of
11~points. Please use US Letter size paper instead of A4.

Paper title is 16~point, caps/lc, bold, centered between 2~horizontal rules.
Top rule is 4~points thick and bottom rule is 1~point thick.
Allow 1/4~inch space above and below title to rules.

Author descriptions are center-justified, initial caps.  The lead
author is to be listed first (left-most), and the Co-authors are set
to follow.  If up to three authors, use a single row of author
descriptions, each one center-justified, and all set side by side;
with more authors or unusually long names or institutions, use more
rows.

Use one-half line space between paragraphs, with no indent.

\section{FIRST LEVEL HEADINGS}

First level headings are all caps, flush left, bold, and in point size
12. Use one line space before the first level heading and one-half line space
after the first level heading.

\subsection{Second Level Heading}

Second level headings are initial caps, flush left, bold, and in point
size 10. Use one line space before the second level heading and one-half line
space after the second level heading.

\subsubsection{Third Level Heading}

Third level headings are flush left, initial caps, bold, and in point
size 10. Use one line space before the third level heading and one-half line
space after the third level heading.

\paragraph{Fourth Level Heading}

Fourth level headings must be flush left, initial caps, bold, and
Roman type.  Use one line space before the fourth level heading, and
place the section text immediately after the heading with no line
break, but an 11 point horizontal space.

\subsection{Citations, Figure, References}

\subsubsection{Citations in Text}

Citations within the text should include the author's last name and
year, e.g., (Cheesman, 1985). 
Be sure that the sentence reads
correctly if the citation is deleted: e.g., instead of ``As described
by (Cheesman, 1985), we first frobulate the widgets,'' write ``As
described by Cheesman (1985), we first frobulate the widgets.''

The references listed at the end of the paper can follow any style as long as it is used consistently.


\subsubsection{Footnotes}

Indicate footnotes with a number\footnote{Sample of the first
  footnote.} in the text. Use 8 point type for footnotes. Place the
footnotes at the bottom of the column in which their markers appear,
continuing to the next column if required. Precede the footnote
section of a column with a 0.5 point horizontal rule 1~inch (6~picas)
long.\footnote{Sample of the second footnote.}

\subsubsection{Figures}

All artwork must be centered, neat, clean, and legible.  All lines
should be very dark for purposes of reproduction, and art work should
not be hand-drawn.  Figures may appear at the top of a column, at the
top of a page spanning multiple columns, inline within a column, or
with text wrapped around them, but the figure number and caption
always appear immediately below the figure.  Leave 2 line spaces
between the figure and the caption. The figure caption is initial caps
and each figure should be numbered consecutively.

Make sure that the figure caption does not get separated from the
figure. Leave extra white space at the bottom of the page rather than
splitting the figure and figure caption.
\begin{figure}[h]
\vspace{.3in}
\centerline{\fbox{This figure intentionally left non-blank}}
\vspace{.3in}
\caption{Sample Figure Caption}
\end{figure}

\subsubsection{Tables}

All tables must be centered, neat, clean, and legible. Do not use hand-drawn tables.
Table number and title always appear above the table.
See Table~\ref{sample-table}.

Use one line space before the table title, one line space after the table title,
and one line space after the table. The table title must be
initial caps and each table numbered consecutively.

\begin{table}[h]
\caption{Sample Table Title} \label{sample-table}
\begin{center}
\begin{tabular}{ll}
\textbf{PART}  &\textbf{DESCRIPTION} \\
\hline \\
Dendrite         &Input terminal \\
Axon             &Output terminal \\
Soma             &Cell body (contains cell nucleus) \\
\end{tabular}
\end{center}
\end{table}

\section{SUPPLEMENTARY MATERIAL}

If you need to include additional appendices during submission, you can include them in the supplementary material file.
You can submit a single file of additional supplementary material which may be either a pdf file (such as proof details) or a zip file for other formats/more files (such as code or videos). 
Note that reviewers are under no obligation to examine your supplementary material. 
If you have only one supplementary pdf file, please upload it as is; otherwise gather everything to the single zip file.

You must use \texttt{aistats2022.sty} as a style file for your supplementary pdf file and follow the same formatting instructions as in the main paper. 
The only difference is that it must be in a \emph{single-column} format.
You can use \texttt{supplement.tex} in our starter pack as a starting point.
Alternatively, you may append the supplementary content to the main paper and split the final PDF into two separate files.

\section{SUBMISSION INSTRUCTIONS}

To submit your paper to AISTATS 2022, please follow these instructions.

\begin{enumerate}
    \item Download \texttt{aistats2022.sty}, \texttt{fancyhdr.sty}, and \texttt{sample\_paper.tex} provided in our starter pack. 
    Please, do not modify the style files as this might result in a formatting violation.
    
    \item Use \texttt{sample\_paper.tex} as a starting point.
    \item Begin your document with
    \begin{flushleft}
    \texttt{\textbackslash documentclass[twoside]\{article\}}\\
    \texttt{\textbackslash usepackage\{aistats2022\}}
    \end{flushleft}
    The \texttt{twoside} option for the class article allows the
    package \texttt{fancyhdr.sty} to include headings for even and odd
    numbered pages.
    \item When you are ready to submit the manuscript, compile the latex file to obtain the pdf file.
    \item Check that the content of your submission, \emph{excluding} references, is limited to \textbf{8 pages}. The number of pages containing references alone is not limited.
    \item Upload the PDF file along with other supplementary material files to the CMT website.
\end{enumerate}

\subsection{Camera-ready Papers}


If your papers are accepted, you will need to submit the camera-ready version. Please make sure that you follow these instructions:
\begin{enumerate}
    \item Change the beginning of your document to
    \begin{flushleft}
    \texttt{\textbackslash documentclass[twoside]\{article\}}\\
    \texttt{\textbackslash usepackage[accepted]\{aistats2022\}}
    \end{flushleft}
    The option \texttt{accepted} for the package
    \texttt{aistats2022.sty} will write a copyright notice at the end of
    the first column of the first page. This option will also print
    headings for the paper.  For the \emph{even} pages, the title of
    the paper will be used as heading and for \emph{odd} pages the
    author names will be used as heading.  If the title of the paper
    is too long or the number of authors is too large, the style will
    print a warning message as heading. If this happens additional
    commands can be used to place as headings shorter versions of the
    title and the author names. This is explained in the next point.
    \item  If you get warning messages as described above, then
    immediately after $\texttt{\textbackslash
    begin\{document\}}$, write
    \begin{flushleft}
    \texttt{\textbackslash runningtitle\{Provide here an alternative
    shorter version of the title of your paper\}}\\
    \texttt{\textbackslash runningauthor\{Provide here the surnames of
    the authors of your paper, all separated by commas\}}
    \end{flushleft}
    Note that the text that appears as argument in \texttt{\textbackslash
      runningtitle} will be printed as a heading in the \emph{even}
    pages. The text that appears as argument in \texttt{\textbackslash
      runningauthor} will be printed as a heading in the \emph{odd}
    pages.  If even the author surnames do not fit, it is acceptable
    to give a subset of author names followed by ``et al.''


    \item The camera-ready versions of the accepted papers are 8
      pages, plus any additional pages needed for references.

    \item If you need to include additional appendices,
      you can include them in the supplementary
      material file.

    \item Please, do not change the layout given by the above
      instructions and by the style file.

\end{enumerate}

\fi



\begin{thebibliography}{1}

\bibitem{balakrishnan2007permanents}
N. Balakrishnan, {\em Permanents, order statistics, outliers, and robustness}, Revista matem{\'a}tica complutense, 20 (2007), pp.~7--107.

\bibitem{benjamini1995controlling}
Y. Benjamini and Y. Hochberg, {\em Controlling the false discovery rate: a practical and powerful approach to multiple testing}, Journal of the Royal statistical society: series B (Methodological), 57 (1995), pp.~289--300.

\bibitem{shorack1996empirical}
G. Shorack and J. Wellner, {\em Empirical Processes with Applications to Statistics}, Society for Industrial and Applied Mathematics (1996)

\bibitem{varian}
H. L. Varian, {\em Position Auctions},  International Journal of Industrial Organization (2007), pp.~1163-1178

\bibitem{jones1995queues}
L. Jones and L. Larson, {\em Efficient Computation of Probabilities of Events Described by Order Statistics and Applications to Queue Inference}, Operations Research Center Working Paper (1994), pp.~289-294

\bibitem{yang2011order}
H. Yang and M. Alouini, {\em Order statistics in wireless communications: diversity, adaptation, and scheduling in MIMO and OFDM systems}

\bibitem{boncelet1987algorithms}
C. G. Boncelet Jr., {\em Algorithms to computer order statistic distributions}, SIAM Journal on Scientific and Statistical Computing, 8 (1987), pp.~868--876.

\bibitem{glueck2008algorithms}
D. H. Glueck, A. Karimpour-Fard, J. Mandel, L. Hunter, and K. E. Muller, {\em Fast computation by block permanents of cumulative distribution functions of order statistics from several populations}, Communications in Statistics: theory and methods, 37 (2008), pp.~2815--2824.
    
\bibitem{schroeder}
J. von Schroeder and T. Dickhaus, {\em Efficient calculation of the joint distribution of order statistics}, Computational Statistics and Data Analysis (2020)

\bibitem{galgana2021dynamic}
R. Galgana, C. Shi, A. Greenwald, and O. Takehiro, {\em A Dynamic Program for Computing the Joint Cumulative Distribution Function of Order Statistics}, SIAM Conference on Applied and Discrete Algorithms (2021).


\end{thebibliography}

\begin{thebibliography}{}
\setlength{\itemindent}{-\leftmargin}
\makeatletter\renewcommand{\@biblabel}[1]{}\makeatother

@inproceedings{jones1995queues,
  title={Efficient Computation of Probabilities of Events Described by Order Statistics and Applications to Queue Inference},
  author={Jones, Lee K. and Larson, Richard C},
  booktitle={Operations Research Center Working Paper},
  pages={289-294},
  year={1994}
}

@inproceedings{yang2011order,
  title={Order statistics in wireless communications: diversity, adaptation, and scheduling in MIMO and OFDM systems},
  author={H. Yang and M. Alouini},
  year={2007}
}

@inproceedings{varian,
  title={Position Auctions},
  author={H. L. Varian},
  booktitle={International Journal of Industrial Organization},
  pages={1163--1178},
  year={2007}
}

@inproceedings{schroeder,
  title={Efficient calculation of the joint distribution of order statistics},
  author={J. von Schroeder and T. Dickhaus},
  booktitle={Computational Statistics and Data Analysis},
  year={2020}
}


\end{thebibliography}


\if 0

\subsubsection*{References}

References follow the acknowledgements.  Use an unnumbered third level
heading for the references section.  Please use the same font
size for references as for the body of the paper---remember that
references do not count against your page length total.

\fi

\end{document}